\title{Detecting Mutations by eBWT\footnote{This work was partially supported by the project MIUR-SIR CMACBioSeq (``Combinatorial methods for analysis and compression of biological sequences'') grant n.~RBSI146R5L.}}
\titlerunning{Detecting Mutations by eBWT} %optional, in case that the title is too long; the running title should fit into the top page column
\author[1]{Nicola Prezza}
\author[1,3]{Nadia Pisanti}
\author[2]{Marinella Sciortino}
\author[1]{Giovanna Rosone\thanks{Corresponding author: Giovanna Rosone, giovanna.rosone@unipi.it}}
\affil[1]{University of Pisa, Dipartimento di Informatica, Italy%\\
  %\texttt{\{nicola.prezza,pisanti\}@di.unipi.it},\texttt{ giovanna.rosone@unipi.it}
  }
\affil[2]{University of Palermo, Dipartimento di Matematica e Informatica, Italy%\\
%  \texttt{marinella.sciortino@unipa.it}
}
\affil[3]{ERABLE Team INRIA, France}
\authorrunning{N. Prezza, N. Pisanti, M. Sciortino, G. Rosone}
\subjclass{F.2.2 Nonnumerical Algorithms and Problems}
\keywords{BWT, LCP Array, SNPs, Reference-free, Assembly-free}
\def\bigR{\mathcal{R}\xspace}
\def\bigS{\mathcal{S}\xspace}
\def\bigT{\mathcal{T}\xspace}
\def\BWT{{\rm BWT}\xspace}
\def\EBWT{{\rm eBWT}\xspace}
\def\LCP{{\rm LCP}\xspace}
\def\GSA{{\rm GSA}\xspace}
\def\EGSA{{\rm Egsa}\xspace}
\def\PosCluster{\rm{e}{\sc BWTclust}\xspace}
\def\DiscoSnp{{\sc DiscoSnp}\xspace}
\def\DiscoSnpPiu{{\sc DiscoSnp}\rm{++}\xspace}
\def\ebwt{{\normalfont\textsf{ebwt}}\xspace}
\def\lcp{{\normalfont\textsf{lcp}}\xspace}
\def\gsa{{\normalfont\textsf{gsa}}\xspace}
\newtheorem{proposition}[theorem]{Proposition}
\newcommandx{\todogio}[2][1=]{\todo[linecolor=red,backgroundcolor=red!25,bordercolor=red,#1]{#2}}
\newcommandx{\todonic}[2][1=]{\todo[linecolor=blue,backgroundcolor=blue!25,bordercolor=blue,#1]{#2}}
\newcommandx{\todomari}[2][1=]{\todo[linecolor=green,backgroundcolor=green!25,bordercolor=green,#1]{#2}}
\begin{document}

\maketitle

\begin{abstract}
In this paper we develop a theory describing how the extended Burrows-Wheeler Transform (\EBWT) of a collection of DNA fragments tends to cluster together the copies of nucleotides sequenced from a genome $G$. Our theory accurately predicts how many copies of any nucleotide are expected inside each such cluster, and how an elegant and precise \LCP array based procedure can locate these clusters in the \EBWT.
	
Our findings are very general and can be applied to a wide range of different problems. In this paper, we consider the case of alignment-free and reference-free SNPs discovery in multiple collections of reads. We  note that, in accordance with our theoretical results, SNPs are clustered in the \EBWT of the reads collection, and  we develop a tool finding SNPs with a simple scan of the \EBWT and \LCP arrays. 
Preliminary results show that our method requires much less coverage than state-of-the-art tools while drastically improving precision and sensitivity.

%More in detail, we show that $X$ of the $c$ sequenced copies $G_1[i], \dots, G_c[i]$ of  $G[i]$ tend to appear in a cluster $\ebwt[a,b]$, where $X\leq c$ is distributed as a Poisson random variable with mean close to $c$, and (ii) \EBWT clusters can be identified by looking at local minima in the Longest Common Prefix array (\LCP) of the collection of reads. 

%producing extremely accurate results, drastically improving precision and sensitivity. 

\end{abstract}

\section{Introduction}

%{\tt All research in algorithmic work in bioinformatics, computational biology and systems biology. The emphasis is mainly on discrete algorithms and machine-learning methods that address important problems in molecular biology, that are founded on sound models, that are computationally efficient, and that have been implemented and tested in simulations and on real datasets. The goal is to present recent research results, including significant work-in-progress, and to identify and explore directions of future research.}
%{\tt Each paper must contain a succinct statement of the issues and of their motivation, a summary of the main results, and a brief explanation of their significance, all accessible to non-specialist readers.}

The cheap and fast Next Generation Sequencing (NGS) technologies are producing huge amounts of data that put a growing pressure on data structures designed to store raw sequencing information, as well as on efficient analysis algorithms: collections of billion of DNA fragments (reads) need to be efficiently indexed for downstream analysis. 
After a sequencing experiment, the most traditional analysis pipeline begins with an error-prone and lossy mapping of the collection of reads onto a reference genome. Among the most widespread tools to align reads on a reference genome we can mention BWA \cite{LiDurbin09}, Bowtie2 \cite{LangmeadSalzberg12}, SOAP2 \cite{LiYLLYKW09}. These methods share the use of the FM-index \cite{FerraginaM00}, an indexing machinery based on the Burrows-Wheeler Transform (BWT) \cite{bwt94}.
Other approaches \cite{KKbmc15,KKbioinf15} combine an index of the reference genome with the BWT of the reads collection in order to boost efficiency and accuracy. 
In some applications, however, aligning reads on a reference genome presents  limitations mainly due to the difficulty of mapping highly repetitive regions, especially in the event of a low-quality reference genome (not to mention the cases in which the reference genome is not even available).

For this reason, indices of reads collections have also been suggested as a lossless dictionary of sequencing data, where sensitive analysis methods can be directly applied without mapping the reads to a reference genome (thus without needing one), nor assembling \cite{kissnp,kissplice12,Leggett2014,Iqbal-CORTEX}. 
In \cite{Dolle2017} the BWT, or more specifically its extension to string collections (named \EBWT \cite{MantaciRRS07,BauerCoxRosoneTCS2013}), is used to index reads from the 1000 Genomes Project \cite{1000GP} in order to support $k$-mer search queries. An \EBWT-based compressed index of sets of reads has also been suggested as a basis for both RNA-Seq \cite{CoxJakobiRosoneST2012} and metagenomic \cite{Ander2013} analyses.
There exist also suffix array based data structures devised for indexing reads collections: the $Gk$ array \cite{PhilippeSLLCR11,ISBRA-2013} and the PgSA \cite{KGD15}. 
%It is a sparse suffix array of the pseudogenome, a sequence obtained by using an heuristic procedure to find overlapping reads. 
Finally, there are several tools (\cite{kissnp,kissplice12,takeabreak14,bubble-spire,Leggett-2013,Leggett2014,Iqbal-CORTEX}) that share the idea of using the de Bruijn graph (dBG) of the reads' $k$-mers. 
The advantages of dBG-based indices include allowing therein the characterization of several biologically-interesting features of the data as suitably shaped and sized \emph{bubbles}\footnote{A \emph{bubble} in a graph is a pair of disjoint paths sharing the same source node and target node.} (e.g. SNPs, INDELs, Alternative Splicing events on RNA-Seq data, sequencing errors can all be modeled as bubbles in the dBG of sequencing data  \cite{kissnp,kissplice12,takeabreak14,bubble-spire,Leggett-2013}). The drawback of these dBG representation, as well as those of suffix array based indices such as the ones introduced in \cite{PhilippeSLLCR11,ISBRA-2013}, is the lossy aspect of getting down to $k$-mers rather than representing the actual whole collection of reads. Also \cite{KKbmc15,KKbioinf15} have this drawback as they index $k$-mers. 
An \EBWT-based indexing method for reads collections, instead, has the advantages to be easy to compress and, at the same time, lossless: (e)BWT indexes support querying $k$-mers without the need to build different indexes for different values of $k$.

Here we introduce the \emph{Positional Clustering} framework: an \EBWT-based index of reads collections where we give characterizations of (i) positions sharing a context as clusters in the \EBWT, and (ii) the onset of these clusters by means of the LCP. 
This clustering allows to locate and investigate, in a lossless index of reads collections, genome positions possibly equivalent to bubbles in the dBG \cite{takeabreak14,kissnp} \emph{independently} from the k-mer length (a major drawback of dBG-based strategies). We thus gain the advantages of dBG-based indices while maintaining those of (e)BWT-based ones. Besides, the \EBWT index also contains abundance data (useful to distinguish errors from variants, as well as distinct variant types) and does not need the demanding read-coherency check at post processing as no micro-assembly has been performed. With these promising advantages with respect to dBG-based strategies, the positional clustering framework allows likewise reference-free and assembly-free detection of SNPs \cite{kissnp,discoSNP2017bioRxiv,Iqbal-CORTEX}, small INDELs \cite{SOAPindel,Iqbal-CORTEX}, sequencing errors \cite{SR2014,SWRU17,LFP17}, 
alternative splicing events \cite{kissplice12}, rearrangements breakpoints \cite{takeabreak14} on raw reads collections.

As a proof-of-concept, we test our theoretical framework with a prototype tool named \PosCluster designed to detect positional clusters and post-process them for assembly-free and reference-free SNPs detection directly on the \EBWT of reads collection. Among several reference-free SNPs finding tools in the literature \cite{kissnp,discoSNP2015,discoSNP2017bioRxiv,Iqbal-CORTEX}, 
%add more
the state-of-the-art is represented by the well documented and maintained {\sc KisSNP} and \DiscoSnp suite \cite{kissnp,discoSNP2015,kSNP2biblio}, where \DiscoSnpPiu \cite{discoSNP2017bioRxiv} is the latest and best performing tool. 
In order to validate the accuracy of positional clustering for finding SNPs, we compared \DiscoSnpPiu sensitivity and precision 
to those of our  prototype \PosCluster. Preliminary results on human chromosomes show that, even when using relatively low coverages (22x), our tool is able to find $91\%$ of all SNPs (vs $70\%$ of \DiscoSnpPiu) with an accuracy of $98\%$ (vs $94\%$ of \DiscoSnpPiu). 

\section{Preliminaries}\label{sec:ADS_collection}
%\section{Preliminaries:SA, LCP and BWT of a collection of strings}
	
Let $\Sigma =\{c_1, c_2, \ldots, c_\sigma\}$ be a finite ordered alphabet with $c_1< c_2< \ldots < c_\sigma$, where $<$ denotes the standard lexicographic order.
%We denote by $\Sigma^*$ the set of finite strings over $\Sigma$.  
%Given a string $s\in \Sigma^*$ we denote by $|s|$ its length. 
For $s\in \Sigma^*$, we denote its letters by $s[1], s[2],\ldots,s[n]$, where $n$ is the \emph{length} of $s$, denoted by $|s|$.
We append to $s\in \Sigma^*$ an end-marker symbol $\$$ that satisfies $\$$ $< c_1$.
Note that, for $1 \leq i \leq n$, $s[i]\in \Sigma$ and $s[n+1]=\$$ $\notin \Sigma$.
A \emph{substring} of $s$ is denoted as $s[i,j] = s[i] \cdots s[j]$, with $s[1,j]$ being called a \emph{prefix} and $s[i,n+1]$ a \emph{suffix} of $s$.
%A range of integers is delimited by a square bracket if the correspondent endpoint is included, whereas a parenthesis means that the endpoint of the range is excluded.
	
We denote by $\bigS=\{R_1,R_2,\ldots,R_{m}\}$ a collection of $m$ strings (reads), and by $\$_i$ the end-marker appended to $R_i$ (for $1\leq i \leq m$), with $\$_i<\$_j$ if $i<j$. 
Let us denote by $P$ the sum of the lengths of all strings in $\bigS$.% (including the end-markers).
%mari: ho modificato N con P poichè dopo abbiamo usato N con altro significato
%Gio: Eventualmente si rimette dopo se serve, cambiando j che e' usato per altro
 The \emph{generalized suffix array} \GSA of the collection $\bigS$ (see \cite{Shi:1996,CGRS_JDA_2016,Louza2017}) is an array containing $P$ pairs of integers $(j,r)$, corresponding to the lexicographically sorted suffixes $R_{r}[j, |R_r|+1]$, where $1 \leq j \leq |R_r|+1$ and $1 \leq r \leq m$. In particular, $\gsa(\bigS)[i]=(j,r)$ (for $1 \leq i \leq P$) if the suffix $R_{r}[j, |R_r|+1]$ is the $i$-th smallest suffix of the strings in $\bigS$. Such a notion is a natural extension of the suffix array of a string (see \cite{Manber:1990}). 
%	
% The \emph{suffix array} \SA of a string $s$  is an array containing the permutation of the integers $1,2, \ldots, |s|$ that arranges the starting positions of the suffixes of $s$ into lexicographical order. There exist some natural extensions of the suffix array to a collection of strings (see \cite{Shi:1996,CGRS_JDA_2016,Louza2017}). We define the \emph{generalized suffix array} \GSA of the collection $\bigS=\{s_2,s_2,\ldots,s_{m}\}$ as the array of $N$ pairs of integers $(j,r)$, corresponding to the lexicographically sorted suffixes $s_{r}[j, n_{r}]$, where $1 \leq j \leq n_{r}$ and $1 \leq r \leq m$. In particular, $\gsa(\bigS)[i]=(j,r)$ (for $1 \leq i \leq N$) if the suffix $s_{r}[j, n_{r}]$ is the $i$-th smallest suffix of the strings in $\bigS$.
%
%A related array is the inverse suffix array $\gsa^{-1}(\bigS)$ which is the inverse permutation, i.e., if $\gsa(\bigS)[i]=(j,r)$ then $\gsa^{-1}(r)[h]$
%$\gsa^{-1}(\bigS)[\SA(\bigS)[i]] = i$ for all $i \in [0 \ldots |s|]$. The value $\SA^{-1}(\bigS)[j]$ is the lexicographical rank of the suffix of $s$ starting at position $j$.
The Burrows-Wheeler Transform (\BWT) \cite{bwt94}, a well known text transformation largely used for data compression and self-indexing compressed data structure, has also been extended to a collection $\bigS$ of strings (see \cite{MantaciRRS07}). Such an extension, known as \emph{extended Burrows-Wheeler Transform} (\EBWT) or multi-string \BWT, is a  reversible transformation that produces a string that is a permutation of the letters of all strings in $\bigS$: $\ebwt(\bigS)$ is obtained by concatenating the symbols cyclically preceding each suffix in the list of lexicographically sorted suffixes of all strings in $\bigS$.   The \EBWT applied to $\bigS$ can also be defined in terms of the generalized suffix array of $\bigS$ (\cite{BauerCoxRosoneTCS2013}): if $\gsa(\bigS)[i]=(t,j)$ then $\ebwt(\bigS)[i] = R_j[t-1]$; when $t=1$, then $\ebwt(\bigS)[i]=\$_j$.   
%  For instance, if $\bigS=\{CAGCATC\$_1, GCTGC\$_2, ATATCCA\$_3\}$ then $\ebwt(\bigS)=CCACC\$_3CTTGC\$_1GTGTA\$_2AAAC$.  
%The output of the \EBWT, denoted by $\ebwt(\bigS)$, can be efficiently obtained by sorting the suffixes of $\bigS$ (see \cite{BauerCoxRosoneTCS2013}).  
	For $\gsa$,$\ebwt$, and $lcp$, the \emph{LF} mapping (resp. \emph{FL}) is a function that associates to each (e)\BWT symbol the position preceding (resp. following) it on the text.
	
% 	The suffix array of a string is related to the \emph{Burrows-Wheeler transform} (\BWT) \cite{bwt94}.
% 	The output of the \BWT can be efficiently obtained by sorting the suffix of the input string, but in this case, to ensure the reversibility of the transform, one needs to append the symbol $\$$ at the end of the input string $s =s[1]\cdots s[n]$, where $s[n]=\$< a$ for each $a \in \Sigma$. 
% 	The output $\bwt(s)$ is a permutation of $s$, obtained as concatenation of the letters that (circularly) precede the first symbol of the suffix in the lexicographically sorted list of its suffixes: for $i=1, \ldots, n$, $\bwt(s)[i]=s[\SA[i]-1]$; when $\SA[i]=1$, then $\bwt(s)[i]=\$$. In other words, the $i$-th symbol of the \BWT is the symbol preceding the $i$-th lexicographically smallest suffix. For instance, if $s=mathematics\$$ then $\bwt(s)=smmihtt\$ecaa$. The Burrows-Wheeler transform can be extended to a collection of strings 
 
	%Furthermore, in practice, one can use a unique end-marker rather than $m$ different end-markers.
	%Note that $\ebwt(\bigS)$ differs, for at least $m$ symbols, from $\bwt$ applied to the string obtained by concatenating all strings in $\bigS$.

% 	The longest common prefix array of a string contains the lengths of the longest common prefixes of the suffixes pointed to by adjacent elements of \SA\ of the string \cite{PuglisiTurpin2008}.
The \emph{longest common prefix} (\LCP) array of a collection $\bigS$ of strings (see \cite{CGRS_JDA_2016,Louza2017,Manzini2017}), denoted by $\lcp(\bigS)$, is an array storing the length of the longest common prefixes between two consecutive suffixes of $\bigS$ in lexicographic order. 
	For each $i=2, \ldots, P$, if $\gsa(\bigS)[i-1]=(p_1,p_2)$ and $\gsa(\bigS)[i]=(q_1,q_2)$, $\lcp(\bigS)[i]$ is the length of the longest common prefix of suffixes starting at positions $p_1$ and $q_1$  of the strings $R_{p_2}$ and $R_{q_2}$, respectively. We set $\lcp(\bigS)[1]=0$.
	
For $\gsa$, $\ebwt$, and $\lcp$, the set $\bigS$ will be omitted when clear from the context.

\section{eBWT positional clustering}
\label{sec:ebwt-clust}

Let $R$ be a read sequenced from a genome $G[1,n]$. We say that $R[j]$ is a \emph{read-copy} of $G[i]$ iff $R[j]$ is copied from  $G[i]$ during the sequencing process (and then possibly changed due to sequencing errors).
Let us consider the \EBWT of a set of reads $\{R_1,\ldots, R_m\}$ of length\footnote{For simplicity of exposition, here we assume that all the reads have the same length $r$. With little more effort, it can be shown that our results hold also when $r$ is the \emph{average} read length.} $r$, sequenced from a genome $G$. Assuming that $c$ is the \emph{coverage} of $G[i]$, let us denote with $R_{i_1}[j_1], \dots, R_{i_c}[j_c]$ the $c$ read-copies of $G[i]$. Should not there be any sequencing error, if we consider $k$ such that the genome fragment $G[i+1,i+k]$ occurs only once in $G$ (that is, nowhere else than right after $G[i]$) and if $r$ is large enough so that  with high probability each $R_{i_t}[j_t]$ is followed by at least $k$ nucleotides, then we observe that the $c$ read copies of $G[i]$ would appear contiguously in the \EBWT of the reads.
We call this phenomenon \emph{\EBWT positional clustering}. 
%shall we also spend a word about repetitions with mutations here?
Due to sequencing errors, and to the presence of repetitions with mutations in real genomes, a \emph{clean} \EBWT positional clustering is not realistic. However, in this section we show that, even in the event of sequencing errors, in the \EBWT of a collection of reads sequenced from a genome $G$, the read-copies of $G[i]$ still \emph{tend to be clustered} together according to a suitable Poisson distribution. 

We make the following assumptions: (i) the sequencing process is uniform, i.e. the positions from where each read is sequenced are uniform and independent random variables, (ii) the probability $\epsilon$ that a base is subject to a sequencing error is a constant\footnote{We assume this to simplify the theoretical framework. In Section~\ref{sec:snp} we will see that our framework works even on real data with sequencing simulated using realistic errors distribution.}, (iii) a sequencing error changes a base to a different one uniformly (i.e. with probability $1/3$ for each of the three possible variants), and (iv) the number $m$ of reads is large (hence, in our theoretical analysis we can assume $m\rightarrow\infty$). 

\begin{definition}[\EBWT cluster]\label{def_ebwt cluster}
	The \emph{\EBWT cluster of $i$}, with $1\leq i \leq n$ being a position on $G$, is the substring $\ebwt[a,b]$ such that $\gsa[a,b]$ is the range of read suffixes prefixed by $G[i+1,i+k]$, where $k<r$ is the smallest value for which $G[i+1,i+k]$ appears only once in $G$. If no such value of $k$ exists, we take $k=r-1$ and say that the cluster is \emph{ambiguous}.
\end{definition}

If no value $k<r$ guarantees that $G[i+1,i+k]$ appears only once in $G$, then the \EBWT cluster of $i$ does not contain only read-copies of $G[i]$ but also those of other $t-1$ characters $G[i_2], \dots, G[i_t]$. We call $t$ the \emph{multiplicity} of the \EBWT cluster. Note that $t=1$ for non-ambiguous clusters.
%In the next theorem we characterize non-ambiguous clusters. Later we will show how to deal with the others (i.e. ambiguous).

\begin{theorem}[\EBWT positional clustering]\label{th:eBWT clustering}
	Let $R_{i_1}[j_1], \dots, R_{i_c}[j_c]$ be the $c$ read-copies of $G[i]$. An expected number $X\leq c$ of these read copies will appear in the	\EBWT cluster $\ebwt[a,b]$ of $i$, where  $X\sim Poi(\lambda)$ is a Poisson random variable with mean
	$$
	\lambda = m\cdot  \frac{r-k}{n}\left(1-\epsilon\right)^{k}
	$$
	and where $k$ is defined as in Definition \ref{def_ebwt cluster}.
\end{theorem}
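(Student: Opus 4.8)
The plan is to compute, for a single fixed read-copy $R_{i_t}[j_t]$ of $G[i]$, the probability $p$ that this particular copy lands inside the cluster $\ebwt[a,b]$, and then to recognize the sum over the $c$ read-copies as a binomial random variable that converges to the stated Poisson in the regime $m\to\infty$. By Definition~\ref{def_ebwt cluster}, the cluster $\ebwt[a,b]$ collects exactly those read suffixes prefixed by $G[i+1,i+k]$. A given read-copy $R_{i_t}[j_t]$ of $G[i]$ contributes to the cluster precisely when the suffix of $R_{i_t}$ starting at position $j_t+1$ begins with a faithful (error-free) copy of $G[i+1,i+k]$. Thus the first step is to identify the two independent events whose conjunction places the copy in the cluster: (a) the read extends at least $k$ positions past $j_t$, i.e. $j_t + k \le r$, so that the $k$-length context is actually present in the read, and (b) none of those $k$ subsequent bases was altered by a sequencing error.

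For event (b), assumption (ii)–(iii) give that each of the $k$ bases is copied correctly with probability $(1-\epsilon)$ independently, so $\Pr(b)=(1-\epsilon)^k$. For event (a), I would use the uniform-sequencing assumption (i): the starting position of a read covering $G[i]$ is uniform, so the offset $j_t$ of $G[i]$ within the read is (approximately) uniform over the read's length, and the probability that at least $k$ further bases follow is $(r-k)/r$. Multiplying, the probability that one designated read-copy falls in the cluster is
\[
p \;=\; \frac{r-k}{r}\,(1-\epsilon)^{k}.
\]
The number $X$ of read-copies in the cluster is then $X=\sum_{t=1}^{c}\mathbf{1}[\text{copy } t \text{ in cluster}]$, a sum of $c$ i.i.d.\ indicator variables, hence $X\sim \mathrm{Bin}(c,p)$, which immediately gives $\mathbb{E}[X]=c\,p\le c$.

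The final step is the Poisson limit. Here I would invoke the law of rare events: under assumption (iv), $m\to\infty$, and since the coverage satisfies $c = m\,r/n$ (each of the $m$ reads of length $r$ covers a fraction $r/n$ of the genome, so position $G[i]$ is expected to be covered $mr/n$ times), the product $c\,p$ stays fixed at
\[
\lambda \;=\; c\,p \;=\; \frac{m r}{n}\cdot\frac{r-k}{r}\,(1-\epsilon)^{k} \;=\; m\cdot\frac{r-k}{n}\,(1-\epsilon)^{k},
\]
matching the claimed mean exactly, while $p\to 0$. The standard binomial-to-Poisson convergence theorem then yields $X\sim \mathrm{Poi}(\lambda)$ in the limit.

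\textbf{The main obstacle} I anticipate is making the independence and uniformity claims rigorous rather than heuristic. In particular, the offset $j_t$ and the error events are treated as independent across the $c$ copies, which relies on the reads being sequenced independently (assumption (i)); one must be careful that conditioning on $G[i]$ being covered does not skew the distribution of the offset in a way that breaks the clean $(r-k)/r$ factor. A second subtlety is the edge behaviour near the ends of the genome and the interplay with the uniqueness of $G[i+1,i+k]$: the definition of $k$ guarantees the context is unique, so no \emph{foreign} read suffix (a copy of some other genomic position) is erroneously counted, but one should confirm that reads straddling a repeat boundary or carrying an error \emph{within} the first $k$ positions are correctly excluded — these are exactly the copies removed by event (b), which is why the bound is $X\le c$ rather than equality. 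I would phrase the argument as an expectation computation valid in the large-$m$ limit, noting that the constant-$\epsilon$ and uniform-coverage idealizations (assumptions (i)–(iii)) are precisely what collapse the analysis to the single clean product above.
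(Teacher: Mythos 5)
Your identification of the two independent per-read events (the read extends at least $k$ positions past its copy of $G[i]$, and those $k$ bases are error-free) and your final value of $\lambda$ match the paper exactly; the paper merely factors the same product as a per-read probability $\frac{r-k}{n}(1-\epsilon)^k$ summed over all $m$ reads, rather than a per-copy probability $\frac{r-k}{r}(1-\epsilon)^k$ summed over the $c$ covering reads.

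However, your final limiting step has a genuine gap. In your decomposition $X\sim\mathrm{Bin}(c,p)$ with $p=\frac{r-k}{r}(1-\epsilon)^k$, and this $p$ does \emph{not} tend to $0$ as $m\to\infty$: it is a constant of the problem (about $0.84$ in the paper's own numerical example), while $c$ is only the moderate coverage (e.g.\ $\approx 30$). So the claim ``$cp$ stays fixed while $p\to 0$'' is false in your parametrization, the law of rare events does not apply, and $\mathrm{Bin}(c,p)$ with fixed $c$ is genuinely different from $\mathrm{Poi}(cp)$ (its variance is $cp(1-p)$, not $cp$). There are two ways to close this. The paper's way: do not condition on $c$ at all, but sum the indicator over all $m$ reads, each landing in the cluster with probability $\frac{r-k}{n}(1-\epsilon)^k$, which genuinely vanishes as $m,n\to\infty$ with $m/n$ bounded, so the binomial-to-Poisson limit is legitimate. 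Alternatively, keep your conditioning but then un-condition: under assumption (i) the coverage $c$ is itself asymptotically $\mathrm{Poi}(mr/n)$, and an independent thinning of a Poisson variable with retention probability $p$ is again Poisson with mean $p\cdot\frac{mr}{n}=\lambda$; in that version the Poissonity of $X$ comes from the randomness of $c$, not from a rare-events limit of the binomial.
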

\begin{proof}
	The probability that a read covers $G[i]$ is $r/n$. However, we are interested only in those reads such that, if $R[j]$ is a read-copy of $G[i]$, then the suffix $R[j+1,r+1]$ contains at least $k$ nucleotides, i.e. $j \leq r-k$. In this way, the suffix $R[j+1,r+1]$ will appear in the \GSA range $\gsa[a,b]$ of suffixes prefixed by $G[i+1, i+k]$ or, equivalently, $R[j]$ will appear in $\ebwt[a,b]$. The probability that a random read from the set is uniformly sampled from such a position is $(r-k)/n$. If the read contains a sequencing error inside $R[j+1,j+k]$, however, the suffix $R[j+1,r+1]$ will not appear in the \GSA range $\gsa[a,b]$. The probability that this event does not happen is $(1-\epsilon)^k$. Since we assume that these events are independent, the probability of their intersection is therefore 
	$$Pr(R[j] \in \ebwt[a,b]) = \frac{r-k}{n}\left(1-\epsilon\right)^{k}$$
	This is a Bernoullian event, and the number $X$ of read-copies of $G[i]$ falling in $\ebwt[a,b]$ is the sum of $m$ independent events of this kind. Then, $X$ follows a Poisson distribution with mean $\lambda = m\cdot \frac{r-k}{n}\left(1-\epsilon\right)^{k}$. 
\end{proof}

Theorem \ref{th:eBWT clustering} states that, if there exists a value $k<r$ such that $G[i+1, i+k]$ appears only once in $G$ (i.e. if the cluster of $i$ is not ambiguous), then $X$ of the $b-a+1$ letters in $\ebwt[a,b]$ are read-copies of $G[i]$. The remaining $(b-a+1)-X$ letters are noise introduced by suffixes that mistakenly end up inside $\gsa[a,b]$ due to sequencing errors. It is not hard to show that this noise is extremely small under the assumption that $G$ is a uniform text; we are aware that this assumption is not realistic, but we will experimentally show in Section~\ref{sec:snp} that also on real genomes we can safely assume $X = b-a+1$ without affecting results. 

Note that the expected coverage of position $G[i]$ is also a Poisson random variable, with mean $\lambda ' = \frac{mr}{n}$ equal to the average coverage. On expectation, the size of non-ambiguous \ebwt clusters is thus $\lambda/\lambda' = \frac{(r-k)(1-\epsilon)^k}{r} <1 $ times the average coverage. E.g., with $k=16$, $\epsilon=0.0012$ (the study~\cite{schirmer2016illumina} reports this maximum average substitution rate for Illumina HiSeq platforms), and $r=100$ the expected cluster size is  $100\cdot\lambda/\lambda' \approx 84\%$ the average coverage. 
%If we increase the read length to $r=250$ (leaving the other parameters unchanged), this percentage increases to $88.6\%$. Thus, longer reads allow to maximize the coverage inside clusters. 

Finally, it is not hard to prove, following the proof of Theorem \ref{th:eBWT clustering}, 
that in the general case with multiplicity $t\geq 1$
the expected cluster size follows a Poisson distribution with mean $t\cdot \lambda$ (because the read-copies of $t$ positions are clustered together). This observation will allow us to detect and discard ambiguous clusters  using a significance test. 

So far, we have demonstrated the \EBWT positional clustering property but we don't have a way for identifying the \EBWT clusters. 
A naive strategy could be to fix a value of $k$ and define clusters to be ranges of $k$-mers in the \gsa. This solution, however, fails to separate read suffixes differing after $k$ positions (this is, indeed, a drawback of all $k$-mer-based strategies).
The aim of Theorem \ref{theorem:LCP} is precisely to fill this gap, allowing us to move from theory to practice. Intuitively, we show that clusters lie between local minima in the \LCP array. This strategy automatically detects the value $k$ satisfying Definition \ref{def_ebwt cluster} in a data-driven approach.

Our result holds only if two conditions on the \EBWT cluster under investigation are satisfied (see Proposition \ref{prop:condition 2 prob} for an analysis of the success probability).
More specifically, we require that the \EBWT cluster $\ebwt[a,b]$ of position $i$ satisfies:

\begin{enumerate}
	\item[(1)] The cluster does not have noise, i.e. $X = b-a+1$, and
	\item[(2)] Let $(p_1,j_1), (p_2,j_2) \in \gsa[a,b]$. For any $x$ such that $k\leq x<r$, if both $R_{p_1}[j_1,r]$ and $R_{p_2}[j_2,r]$ contain their leftmost sequencing errors in $R_{p_1}[j_1+x]$ and $R_{p_2}[j_2+x]$, then $R_{p_1}[j_1+x] \neq R_{p_2}[j_2+x]$.
    %\todo{potrebbe capitare che $j_1+x$ o $j_2+x$ siano $> r+1$ se $k\leq x<r$, no?}
    %\todonic{si, ma in questo caso la condizione dell'if è falsa, quindi la proprietà è vera banalmente}
\end{enumerate}

Proposition~\ref{prop:condition 2 prob} will show that with probability high enough Condition (2) is satisfied in practice. E.g., with $r=100$, $\epsilon=0.0012$ (see~\cite{schirmer2016illumina}), mean coverage $\lambda'=44$ (the coverage of one of our experiments in Section \ref{sec:results}), and for any $k\geq 11$, Proposition \ref{prop:condition 2 prob} shows that the condition holds (in expectation) on at least $93\%$ of the non-ambiguous clusters. 

\begin{theorem}\label{theorem:LCP}
	Let $\ebwt[a,b]$ be the \EBWT cluster of a position $i$ meeting Conditions (1) and (2). Then, there exists a value $a < p \leq b$ such that $\lcp[a+1,p]$ is a non-decreasing sequence and  $\lcp[p+1,b]$ is a non-increasing sequence.
\end{theorem}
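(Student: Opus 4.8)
The plan is to reduce the whole statement to one structural fact about how far each read-suffix in the cluster agrees with the genomic continuation of the shared context. Write $g = G[i+1], G[i+2], \dots$ for this continuation, and for each suffix indexed by $\ell \in [a,b]$ let $d_\ell$ denote the length of the longest common prefix between that suffix and $g$ (so the suffix spells the first $d_\ell$ symbols of $g$ and then diverges, either through its leftmost sequencing error or at its end-marker). By Definition~\ref{def_ebwt cluster} every suffix in $\gsa[a,b]$ is prefixed by $G[i+1,i+k]$, so $d_\ell \ge k$; and by Condition~(1) the range contains only genuine read-copies of $G[i]$, so each suffix really does follow $g$ for exactly $d_\ell$ symbols. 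Everything will then be driven by how the values $d_\ell$, read in lexicographic (i.e. $\gsa$) order, govern $\lcp$.

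First I would establish the exact local formula $\lcp[\ell] = \min(d_{\ell-1}, d_\ell)$ for every $\ell \in [a+1,b]$. Two consecutive suffixes agree with $g$, and hence with each other, on their first $\min(d_{\ell-1},d_\ell)$ symbols. If $d_{\ell-1}\neq d_\ell$ they must differ at the next symbol, since there one suffix still carries the genomic symbol while the other already carries an erroneous, hence different, one. If instead $d_{\ell-1}=d_\ell$, this is exactly the configuration governed by Condition~(2): both suffixes have their leftmost error at the same offset, so that condition forces the two erroneous symbols to differ, and the common prefix again stops there. This is the single place where Condition~(2) is indispensable: without it two suffixes could leave the $g$-path through the same symbol and keep agreeing, producing a local bump in $\lcp$ that would wreck unimodality.

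Next I would show that reading $d_a, d_{a+1}, \dots, d_b$ in $\gsa$ order yields a unimodal (non-decreasing then non-increasing) sequence. The cleanest route is to picture the trie of the cluster suffixes: they share the path spelling the first $k$ symbols of $g$, after which they follow a single spine along $g$, each suffix leaving the spine after $d_\ell$ symbols through a non-genomic edge. By the previous paragraph, Condition~(2) forces at most one suffix to leave through each such edge, so every off-spine branch is a single leaf. A depth-first (lexicographic) traversal therefore descends the spine collecting the leaves on the side of the symbols smaller than the genomic one, during which the leaving length $d_\ell$ only grows; reaches the deepest suffix(es); and then ascends collecting the leaves on the larger side, during which $d_\ell$ only shrinks. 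Equivalently one may invoke the range-minimum property of \LCP arrays: combined with $\lcp[\ell]=\min(d_{\ell-1},d_\ell)$ it gives $\min(d_\ell,\dots,d_{\ell'})=\min(d_\ell,d_{\ell'})$ for all $\ell<\ell'$, i.e. no interior offset is strictly smaller than both endpoints, which is precisely unimodality of $d$.

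Finally, letting $q$ be the index of the maximal offset, unimodality of $d$ yields $\lcp[\ell]=d_{\ell-1}$ for $\ell\le q$ (a non-decreasing run) and $\lcp[\ell]=d_\ell$ for $\ell>q$ (a non-increasing run), so the desired split point is $p=q$, with the obvious adjustment to $p=a+1$ when $d$ is already non-increasing. I expect the real difficulty to lie in this order argument rather than in the two content-based facts: one has to argue carefully that lexicographic order interacts with the divergence lengths exactly as claimed, and to dispose of the boundary cases, namely error-free suffixes, suffixes that hit their end-marker inside the window, and the alphabet-ordering of the divergent symbols, none of which change the structure but each of which needs a line of justification.
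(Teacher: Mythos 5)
Your proof is correct and follows essentially the same route as the paper's: both arguments rest on the observation that, under Conditions (1) and (2), each suffix in the cluster diverges from the genomic continuation at a unique (offset, letter) pair, so that in lexicographic order the suffixes whose divergent letter is smaller than the genomic one appear first, sorted by increasing divergence depth, followed by the deepest (error-free) suffix and then the remaining ones sorted by decreasing depth. Your explicit local formula $\lcp[\ell]=\min(d_{\ell-1},d_\ell)$ and the reduction to unimodality of the divergence depths $d_\ell$ is a cleaner and more explicit packaging of what the paper argues directly on neighbouring suffixes (via the positions $x'\leq x\leq x''$ of their leftmost mismatches), but the underlying idea is the same.
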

\begin{proof}
%Let $\gsa[a,b] = (j_1, r_1), \dots, (j_t, r_t)$, with $t=b-a+1$. 
Let us denote by $p_M$ the largest index in $(a,b]$ such that $\lcp[p_M]=M$, where $M$ is the maximum value of LCP in $(a,b]$ (if $M$ occurs multiple times, take the rightmost occurrence). We claim the theorem holds for $p=p_M$. Let us denote by $q$ and $j$, $1\leq q \leq m$, $1\leq j \leq r$, the positive integers such that $\gsa[p_M]=(j,q)$. This means, by using Condition (2), that the read $R_q$ contains the longest prefix $R_q[j,j+M-1]$ without sequencing errors, $j+M\leq r+1$. 
%mari: note that def 1 implies that  $p_M>a$.
Consider any other suffix $R_u[j_u,r+1]$ in the range and let $R_u[j_u+x]$ be the leftmost mismatch letter in $R_u[j_u,r+1]$, with $x\geq k$.  Note that if $R_u[j_u,r+1]$ does not contain sequencing errors, then $j_u+x-1=r$, otherwise $R_u[j_u+x]$ has been mutated with an error. We suppose that the mutation at position $j_u+x$ generated a letter lexicographically smaller than that of the genome (the other case is symmetric). By Condition (2), $x\leq M+1$ and  
%Mari: We can suppose that $p_M>a+1$ (the case $p_M=a+1$ is simpler because the range $[a+1,p_M]$ has size $1$). 
no other suffix $R_{v}[j_v,r+1]$ in the range satisfies $R_{v}[j_{v}+x] = R_u[j_u+x]$. 
%Let us denote by $R_{u'}[j_{u'},r]$ and by $R_{u''}[j_{u''},r]$ the suffix that immediately precedes and follows $R_{u}[j_{u},r]$, respectively. 
Then, $R_u[j_u,r+1]$ falls right after a suffix $R_{u'}[j_{u'},r+1]$ such that either $R_{u'}[j_{u'},r]$ properly prefixes $R_u[j_u,r+1]$ or the leftmost mismatch occurs at position $x'\leq x$. Similarly, $R_u[j_u,r+1]$ falls right before a suffix $R_{u''}[j_{u''},r+1]$ such that $R_{u}[j_{u}+x] < R_{u''}[j_{u''}+x]$ and whose leftmost mismatch position $x''$ satisfies $x\leq x''\leq M+1$. This shows that, before suffix $R_q[j,r+1]$, other suffixes are ordered by increasing position of their leftmost mismatch letter (since $x'\leq x \leq x''$) and, then, by the lexicographic order among mismatch letters, which in particular implies that before suffix $R_q[j,r+1]$ the \lcp values are non-decreasing. Symmetrically, with a similar reasoning one can easily prove that after suffix $R_q[j,r+1]$ the \lcp values are non-increasing. 
\end{proof}

\begin{proposition}\label{prop:condition 2 prob}
	Given a \EBWT cluster $\ebwt[a,b]$ with multiplicity $t\geq 1$, Condition (2) holds with probability at least
	$$
	CDF(t\lambda,\lceil t\lambda\rceil+\delta) \cdot\left(\sum_{e=0}^3 {\lceil t\lambda\rceil+\delta\choose e}\cdot (1-\epsilon)^{\lceil t\lambda\rceil+\delta-e}\cdot \epsilon^e\cdot c_e \right)^{r-k}
	$$
for any integer $\delta\geq 0$ and where: $k$ is the value defined in Definition \ref{def_ebwt cluster}, $\lambda$ is the mean of the Poisson distribution of Theorem \ref{th:eBWT clustering}, $CDF(\mu,z)$ is the cumulative distribution function of a Poisson random variable with mean $\mu$ evaluated in $z$, and  $c_0 = c_1 = 1$, $c_2=2/3$, $c_3=2/9$.
\end{proposition}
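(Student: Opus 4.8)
The plan is to condition on $N$, the number of read-suffixes falling inside the cluster $\ebwt[a,b]$, and to bound separately the event that $N$ is not too large and the event that Condition (2) holds given $N$. By the multiplicity-$t$ version of Theorem~\ref{th:eBWT clustering} (the remark following its proof), $N$ is a Poisson variable of mean $t\lambda$, so $Pr(N\le M)=CDF(t\lambda,M)$ with $M:=\lceil t\lambda\rceil+\delta$. Writing $f(n)=Pr(\text{Condition (2)}\mid N=n)$, I would start from
\[
Pr(\text{Condition (2)})\ \ge\ \sum_{n=0}^{M}Pr(N=n)\,f(n),
\]
simply discarding the terms with $n>M$; the factor $CDF(t\lambda,M)$ of the claimed bound will emerge from $\sum_{n\le M}Pr(N=n)$ once each $f(n)$ is replaced by a common lower bound independent of $n$.

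The second factor comes from a per-offset analysis. First I would \emph{relax} Condition (2): instead of asking for distinct mismatch letters only among suffixes whose \emph{leftmost} error sits at offset $x$, I ask for distinctness among \emph{all} suffixes carrying any error at offset $x$. Call this stronger event $B$; since $B$ implies Condition (2), a lower bound on $Pr(B\mid N=n)$ suffices. For a fixed offset $x$ with $k\le x<r$, each of the (at most $n$) suffixes reaching that offset independently carries an error there with probability $\epsilon$, and by assumption~(iii) the erroneous base is uniform among the three alternatives. Thus the number $e$ of errors at offset $x$ is Binomial, and given $e$ errors the mismatch letters are pairwise distinct with a constant probability $c_e$: clearly $c_0=c_1=1$, two and three uniform choices over three symbols are all distinct with probability $c_2=2/3$ and $c_3=2/9$, and $c_e=0$ for $e\ge 4$ by pigeonhole on three symbols. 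Overcounting the active suffixes as $n$ (rather than the smaller number actually reaching offset $x$) only lowers this probability, so
\[
Pr(B_x\mid N=n)\ \ge\ \sum_{e=0}^{3}\binom{n}{e}(1-\epsilon)^{n-e}\epsilon^{e}c_e .
\]
Because errors at distinct offsets are independent, multiplying over the $r-k$ offsets $x\in\{k,\dots,r-1\}$ gives $f(n)\ge Pr(B\mid N=n)\ge\big(\sum_{e=0}^{3}\binom{n}{e}(1-\epsilon)^{n-e}\epsilon^{e}c_e\big)^{r-k}=:g(n)$.

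Finally I would note that $g(n)$ is non-increasing in $n$ (by a coupling: passing from $n$ to $n+1$ suffixes leaves the first $n$ unchanged and the extra suffix can only create a new letter collision), so for every $n\le M$ we have $f(n)\ge g(n)\ge g(M)$. Substituting this uniform bound into the first display yields
\[
Pr(\text{Condition (2)})\ \ge\ g(M)\sum_{n=0}^{M}Pr(N=n)\ =\ CDF(t\lambda,M)\cdot g(M),
\]
which is exactly the claimed inequality with $M=\lceil t\lambda\rceil+\delta$.

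The step I expect to be most delicate is the per-offset collision probability, namely justifying the constants $c_e$ and the reduction to an offset-wise Binomial. In the multiplicity-$t$ case the suffixes in the cluster originate from $t$ distinct genome positions, so their \emph{correct} bases at a given offset need not coincide; arguing that the worst case is still captured by drawing each mismatch uniformly from three symbols, and that dropping the ``leftmost'' qualifier together with the shrinking set of suffixes reaching large offsets can only decrease the probability, is where the estimate becomes genuinely conservative rather than exact. The remaining ingredients — the Poisson law for $N$, the independence of errors across offsets and across reads, and the monotonicity of $g$ — are routine given the modelling assumptions~(i)--(iv).
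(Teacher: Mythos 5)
Your proposal follows essentially the same route as the paper's own proof: bound the number of suffixes in the cluster by $\lceil t\lambda\rceil+\delta$ via the Poisson CDF, lower-bound the per-offset collision-avoidance probability by $\sum_{e=0}^{3}\binom{Y}{e}(1-\epsilon)^{Y-e}\epsilon^{e}c_{e}$ using the worst case $Y=\lceil t\lambda\rceil+\delta$ by monotonicity, and multiply over the $r-k$ independent offsets before applying the CDF correction factor. Your version is, if anything, slightly more explicit than the paper's (the formal conditioning on $N$, the relaxation to the stronger event $B$ that drops the ``leftmost error'' qualifier, and the coupling argument for monotonicity are all stated rather than left implicit), but the decomposition and all key estimates coincide.
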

\begin{proof}
%Let us denote with $\lambda_k$ the mean of the distribution of Theorem \ref{th:eBWT clustering}, parametrized on $k$, and note that $\lambda_x$ is non-increasing in $x$.
First note that, by Theorem \ref{th:eBWT clustering},
the number of suffixes in the cluster sharing their first $k$ bases is at most $\lceil t\lambda\rceil+\delta$ with probability $CDF(t\lambda ,\lceil t\lambda\rceil +\delta)$.
We first analyze the probability that the condition holds for a fixed offset $x$ (i.e. looking at the $(x+1)$-th base of all suffixes in the range sharing their first $x$ characters), and then compute the joint probability for all $k\leq x<r$.

Note that the condition cannot fail if the number $e$ of bases $R_{p_{h}}[j_h+x]$ that have been subject to errors is either $e=0$ or $e=1$. The condition does not fail also if we have $e=2$ or $e=3$ \emph{distinct} errors, while it always fails for $e\geq 4$ (since at least two errors will produce the same base).
On a generic number $Y\leq \lceil t\lambda\rceil+\delta$ of suffixes (those sharing their first $x$ bases), one can verify that the probability that the condition does not fail for a fixed number $e<4$ of errors is ${Y\choose e}\cdot (1-\epsilon)^{Y-e}\cdot \epsilon^e\cdot c_e$, where $c_0 = c_1 = 1$, $c_2=2/3$, $c_3=2/9$.
Since these events are disjoint, we can sum these probabilities to get a lower bound to the probability that condition (2) holds on a specific offset $x$ on $Y$ suffixes. Since this probability decreases as $Y$ increases, we get a lower bound by taking the maximum $Y=\lceil t\lambda\rceil+\delta$, and obtain probability $\sum_{e=0}^3 {\lceil t\lambda\rceil+\delta\choose e}\cdot (1-\epsilon)^{\lceil t\lambda\rceil+\delta-e}\cdot \epsilon^e\cdot c_e$. 

To get a lower bound to the probability that the condition holds \emph{simultaneously} on all $k\leq x<r$, we take the product of these probabilities (note that errors at different offsets are independent events). 
This reasoning holds only if there are at most $\lceil t\lambda\rceil+\delta$ suffixes sharing their first $k$ bases --- which happens with probability $CDF(t\lambda,\lceil t\lambda\rceil+\delta)$ --- so we must include this multiplicative correction factor to get our final lower bound. 
\end{proof}

Note: to get a lower bound that is independent from $\delta$ in Proposition \ref{prop:condition 2 prob}, use the $\delta$ that maximizes the expression (since the lower bound holds for any $\delta$). 

According to Theorem \ref{theorem:LCP}, clusters are delimited by local minima in the \LCP array of the read set. This gives us a strategy for finding clusters that is \emph{independent} from $k$. Importantly, the proof of Theorem \ref{theorem:LCP} also gives us the suffix in the range (the $p$-th suffix) whose longest prefix without sequencing errors is maximized. This will be useful in the next section to efficiently compute a consensus of the reads in the cluster. 

%Note that the proof of Theorem \ref{theorem:LCP} still holds on a sub-cluster if conditions (i) and (ii) are met by just a subsequence $\ebwt[a+v',b-v'']$ of $\ebwt[a,b]$. This makes the method more robust in practice (i.e. we will still detect a sub-cluster if the whole cluster does not satisfy the two conditions).

Observe that by applying Theorem \ref{theorem:LCP} we also find ambiguous clusters. However, the expected length of these clusters is a multiple of $\lambda$, so they can be reliably discarded with a significance test based on the Poisson distribution of Theorem \ref{th:eBWT clustering}.

\section{Experimental Validation: Reference-Free SNPs Discovery}
\label{sec:snp}

When the reads dataset contains variations (e.g. two allele of the same individual, or two or more distinct individuals, or different isoforms of the same gene in RNA-Seq data, or different reads covering the same genome fragment in a sequencing process, etc.), the \EBWT positional clustering described in the previous section can be used to detect, directly from the raw reads (hence, without assembly and without the need of a reference genome), positions $G[i]$ exhibiting possibly different values, but followed by the same context: they will be in a 
%(possibly larger) 
cluster delimited by \LCP minima and containing possibly different letters (corresponding to the read copies of the variants of $G[i]$ in the read set). 
%(note that standard datasets are formed by reads coming from both strands, therefore allow finding both left- and right- breakpoints) 
This general idea
can be used in several applications: error correction, assembly (the strategy finds overlaps between reads, so it can be used for assembly purposes), haplotype discovery (if fed with reads from just one diploid sample, this strategy can find heterozygous sites), and so on.

In this section, with the purpose of experimentally validating the theoretical framework of Section~\ref{sec:ebwt-clust}, we describe a new alignment-free and reference-free method that, with a simple scan of the \EBWT and \LCP array, detects SNPs in read collections.

Since (averagely) half of the reads comes from the forward (F) strand, and half from the reverse-complement (RC) strand, we denote with the term \emph{right} (resp. \emph{left}) \emph{breakpoint} those variants found in a cluster formed by reads coming from the F (resp. RC) strand, and therefore sharing the right (resp. left) context adjacent to the variant. A \emph{non-isolated SNP} \cite{discoSNP2015} is a variant at position $i$ such that the closest variant is within $k$ bases from $i$, for some fixed $k$ (we use $k=31$ in our validation procedure, see below).
The SNP is \emph{isolated} otherwise. Note that, while isolated SNPs are found twice with our method (one as a right breakpoint and one as a left breakpoint), this is not true for non-isolated SNPs: variants at the sides of a group of non-isolated SNPs are found as either left or right breakpoint, while SNPs \emph{inside} the group will be found with positional clustering plus a partial local assembly of the reads in the cluster. In the next two subsections we give all the details of our strategy. 

\subsection{Pre-processing (\EBWT computation)}
\label{subsec:Pre-processing}

Since we do not aim at finding matches between corresponding pairs of clusters on the forward and reverse strands, we 
%overcome the fact that the coverage is uniformly divided between the F and RC strands by 
augment the input adding the reverse-complement of the reads: for a reads set $\bigS$, we 
add $\bigS^{RC}$ as well. 
Hence, given two reads sets $\bigS$ and $\bigT$, in the pre-processing phase we compute
%\footnote{One could use
%\EGSA \cite{Louza2017}, BEETL \cite{BauerCoxRosoneTCS2013,CGRS_JDA_2016}, gsa-is \cite{LouzaGT17b}} 
$\ebwt(\bigR)$, $\lcp(\bigR)$, and $\gsa(\bigR)$,
%, GAP \cite{Manzini2017}. 
for $\bigR = \{ \bigS \cup \bigS^{RC} \cup \bigT \cup \bigT^{RC}\}$ \cite{Louza2017,CGRS_JDA_2016,LouzaGT17b}.
%
%More formally, given two sets of reads $\bigS^{F}$ and $\bigT^{F}$ from two samples, we compute $\bigS^{RC}$ and $\bigT^{RC}$ containing respectively the reverse-complemented reads of $\bigS^{F}$ and $\bigT^{F}$ and build $\bigS = \bigS^{F}  \cup \bigS^{RC}$ and $\bigT = \bigT^{F}  \cup \bigT^{RC}$. Let $\bigR = \bigS \cup \bigT$.
%In the first pre-processing phase of our method, we compute $\ebwt(\bigR)$, $\lcp(\bigR)$, and $\gsa(\bigR)$
We also compute $\gsa(\bigR)$ because we will need it (see Subsection \ref{ssec:post}) to extract left and right contexts of the SNP. Though this could be achieved by performing (in external memory) multiple steps of 
LF- and FL-mappings on the \EBWT, this would significantly slow-down our tool. Note that our approach can also be generalized to more than two reads collections.

%The computation of \EBWT, \LCP and \GSA can be performed with several tools offering different space/time trade-offs. 

% GAP non calcola il SA, forse l'array dei colori
%GAP algorithm\footnote{\url{http://people.unipmn.it/manzini/gap/gap.zip}} \cite{Manzini2017},

%%%Gio: Spostato nella sezione sperimenti
%We experimentally observed (see Section \ref{sec:results}) that this pre-processing step is much more computationally expensive than the actual SNP calling step that uses our theoretical framework. The problem of efficiently computing the \EBWT is being intensively studied, and improving the efficiency of this step is out of the aim of this paper. A recent work \cite{Dolle2017} suggests that direct storing of read data with a compressed \EBWT leads to considerable space savings, and could therefore become the standard in the future. We remark that our strategy can be easily adapted to take as input directly these compressed formats. 

%For instance, current implementation of \EGSA shows an overhead in disk space usage. For instance, for a collection of $400$ millions of strings of length $100$ \EGSA needs $40$ bytes per symbol, whereas BEETL uses $7$ bytes per symbol. These tools use a single end-marker $\$$ for all reads in the input set. 
%BCR: bwt=1, lcp=1, pos_suff=1, color=4 -> Total 7 byte

\subsection{SNP calling}
%\label{sec:post processing}
\label{ssec:post}

Our SNPs calling approach takes as input $\ebwt(\bigR)$, $\lcp(\bigR)$, and $\gsa(\bigR)$ and outputs SNPs in {\sc KisSNP2} format~\cite{kSNP2biblio}: a fasta file containing a pair of sequences per SNP (one per sample, containing the SNP and its context). 
%The DNA sequences in the fasta file contain the SNPs and their left and right contexts (which could hide additional non-isolated SNPs), while sequence headers contain additional information such as the event number (unique for each pair of reads). 
The SNP calling is divided in two main steps.

\textbf{Build clusters.}  First, we scan $\ebwt(\bigR)$ and $\lcp(\bigR)$, find clusters using Theorem \ref{theorem:LCP}, and store them to file as a sequence of ranges on the \EBWT. In addition, while computing clusters we also apply a threshold of minimum \LCP (by default, 16): we cut clusters' tails containing \LCP values smaller than the threshold. This additional filtering drastically reduces the number of clusters saved to file (and hence memory usage and running time), since the original strategy would otherwise output many short clusters containing small \LCP values corresponding to noise.

\textbf{Call SNPs.} The second step takes as input the clusters file, $\ebwt(\bigR)$, $\lcp(\bigR)$, $\gsa(\bigR)$, and $\bigR$, and processes clusters from first to last as follows:
\begin{enumerate}
	\item We test the cluster's length using the Poisson distribution predicted by Theorem \ref{th:eBWT clustering}; if the cluster's length falls in one of the two  tails at the sides of the distribution (by default, the two tails summing up to $5\%$ of the distribution), then the cluster is discarded; Moreover, due to $k$-mers that are not present in the genome but appear in the reads because of sequencing errors (which introduce noise around cluster length equal to 1), we also fix a minimum value of length for the clusters (by default, 4 letters per sample).

\item In the remaining clusters, we find the most frequent nucleotides $b_1$ and $b_2$ of samples 1 and 2, respectively, and check whether $b_1 \neq b_2$; if so, then we have a candidate SNP: for each sample, we use the \GSA to retrieve the coordinate of the read containing the longest right-context without errors (see explanation after Proposition \ref{prop:condition 2 prob}); moreover, we retrieve, and temporarily store in a buffer, the coordinates of the remaining reads in the cluster. 

\item After processing all events, we scan the fasta file storing $\bigR$ to retrieve the reads of interest (those whose coordinates are in the buffer); for each one of them, we compute a partial assembly of the read prefixes preceding the SNP, for each of the two samples. This allows us to compute a left-context for each SNP (by default, of length 20), and represents a further validation step: if the assembly cannot be built because a consensus cannot be found, then the cluster is discarded.
%(this is the case, e.g. if we caught the left borders of two repeated regions: this signal could be mistaken with that produced by a SNP, but the left-contexts will differ). 
Note that these left-contexts preceding SNPs (which are actually right-contexts if the cluster is formed by reads from the RC strand) allow us to capture non-isolated SNPs.
\end{enumerate}

\textbf{Complexity} In the clustering step, we process the \EBWT and \LCP and on-the-fly output clusters to disk. 
%This takes linear time (in the input size) and does not use disk/RAM in addition to input and output. 
The SNP-calling step performs one scan of the \EBWT, \GSA, and clusters file to detect interesting clusters, plus one additional scan of the read set to retrieve contexts surrounding SNPs. Both these phases take linear time in the size of the input and do not use disk space in addition to the input and output. 
Due to the fact that we store in a buffer the coordinates of reads inside interesting clusters, this step uses an amount of RAM proportional to the number of SNPs times the average cluster size $\lambda$ times the read length $r$ (e.g. a few hundred MB in our case study of Section~\ref{sec:results}). Notice that our method is very easy to parallelize, as the analysis of each cluster is independent from the others.
%both cluster detection and SNP calling phases can uniformly assign clusters among processors. 

\subsection{Validation}

Here we describe the validation tool we designed to measure the sensitivity and precision of 
%our SNP calling strategy as well as comparing it with that of 
any tool outputting SNPs in {\sc KisSNP2} format. Note that we output SNPs as pairs of reads containing the actual SNPs plus their contexts (one sequence per sample). This can be formalized as follows: the output is a series of pairs of triples (we call them \emph{calls}) $(L',s',R'),\ (L'',s'',R'')$ where $L'$, $R'$, $L''$, $R''$ are the left/right contexts of the SNP in the two samples, and letters $s'$, $s''$ are the actual variant. Given a {\rm .vcf} file (Variant Call Format) containing the ground truth, the most precise way to validate this kind of output is to check that the triples actually match contexts surrounding true SNPs on the reference genome (used here just for accuracy validation purposes). That is, for each pair in the output calls:
\begin{enumerate}
	\item If there is a SNP $s'\rightarrow s''$ in the .vcf that is surrounded in the first sample by contexts $L', R'$ (or their RC), then $(L',s',R'),\ (L'',s'',R'')$ is a true positive (TP).
	\item Any pair  $(L',s',R'),\ (L'',s'',R'')$ that is not matched with any SNP in the ground truth (as described above) is a false positive (FP).
	\item Any SNP in the ground truth that is not matched with any call is a false negative (FN). 
\end{enumerate}
We implemented the above validation strategy with a (quite standard) reduction of the problem to the 2D range reporting problem: we insert in a two-dimensional grid two points per SNP (from the .vcf) using as coordinates the ranks of its right and (reversed) left contexts among the sorted right and (reversed) left contexts of all SNPs (contexts from the first sample) on the F and RC strands. Given a pair $(L',s',R'),\ (L'',s'',R'')$, we find the two-dimensional range corresponding to all SNPs in the ground truth whose right and (reversed) left contexts are prefixed by $R'$ and (the reversed) $L'$, respectively. If there is at least one point in the range matching the variation $s'\rightarrow s''$, then the call is a TP\footnote{Since other tools such as \DiscoSnpPiu do not preserve the order of samples in the output, we actually check also the variant $s''\rightarrow s'$ and also search the range corresponding to $L''$ and $R''$} (case 1 above; note: to be a TP, a SNP can be found either on the F or on the RC strand, or both). Otherwise, it is a FP (case 2 above). Finally, pairs of points (same SNP on the F/RC strands) that have not been found by any call are marked as FN (case 3 above). 
%In order to find non-isolated SNPs hidden inside $L'$, $R'$, $L''$, and $R''$, 
We repeat the procedure for any other SNP found between the two strings $L's'R'$ and $L''s''R''$ to find non-isolated SNPs.

\subsection{Preliminary Experiments}\label{sec:results}

In order to valuate our method, we  compare \PosCluster with \DiscoSnpPiu,
%\footnote{\url{https://github.com/GATB/DiscoSnp}}
that is a revisiting of the \DiscoSnp algorithm:
while \DiscoSnp detects both heterozygous and homozygous isolated SNPs from any number of read datasets without a reference genome, \DiscoSnpPiu is designed for detecting and ranking all kinds of SNPs and small indels from raw read set(s).
As shown in \cite{discoSNP2017bioRxiv}, \DiscoSnpPiu performs better than state-of-the-art methods in terms of both computational resources and quality of the results.

\DiscoSnpPiu is composed of several independent tools.
As a preprocessing step, the dBG of the input datasets is built, by also removing erroneous $k$-mers. 
Then, \DiscoSnpPiu detects bubbles generated by the presence of SNPs (isolated or not) and indels and it outputs a fasta file containing the variant sequences ({\sc KisSnp2} module). A final step ({\sc kissreads2}) maps back the reads from all input read sets on the variant sequences, mainly in order to determine the read coverage per allele and per read set of each variant. This module also computes a rank per variant, indicating whether it exhibits discriminant allele frequencies in the datasets. 
The last module generates a {\rm .vcf} of the predicted variants. If no reference genome is provided this step is a change of format from fasta to {\rm .vcf} ({\sc VCFcreator} module).

We propose two experiments simulating two human chromosomes haploid read sets obtained mutating (with real {\rm .vcf} files) real reference chromosomes\footnote{\url{ftp.1000genomes.ebi.ac.uk//vol1/ftp/technical/reference/phase2_reference_assembly_sequence/hs37d5.fa.gz}}.
The final goal of the experiments is to reconstruct the variations contained in the original (ground truth) {\rm .vcf} files.
We generated the mutated chromosomes using the 1000 genome project (phase $3$) {\rm .vcf} files\footnote{\url{ftp.1000genomes.ebi.ac.uk/vol1/ftp/release/20130502/}} related to chromosomes $16$ and $22$, suitably filtered to keep only SNPs of individuals HG00100 (ch.$16$) and HG00096 (ch.$22$).
From these files, we simulated Illumina sequencing with SimSeq \cite{SimSeq2011}, both for reference and mutated chromosomes: individual HG00096 (ch.$22$) at a $29$x getting $15,000,000$ of $100$-bp reads, and individual HG00100 (ch.$16$) a $22$x getting $20,000,000$ of $100$-bp reads.

%ftp.1000genomes.ebi.ac.uk/vol1/ftp/release/20130502/ALL.chr22.phase3_shapeit2_mvncall_integrated_v5a.20130502.genotypes.vcf.gz
%ftp://ftp.1000genomes.ebi.ac.uk/vol1/ftp/release/20130502/ALL.chr16.phase3_shapeit2_mvncall_integrated_v5a.20130502.genotypes.vcf.gz

Our framework has been implemented in C++ and is available at 
\url{https://github.com/nicolaprezza/eBWTclust}. 
All tests were done on a DELL PowerEdge R630 machine, used in non exclusive mode.
Our platform is a $24$-core machine with Intel(R) Xeon(R) CPU E5-2620 v3 at $2.40$ GHz, with $128$ GB of shared memory. The system is Ubuntu 14.04.2 LTS. Note that, unlike \DiscoSnpPiu, our tool is currently able to use one core only.

We experimentally observed that the pre-processing step is more computationally expensive than the actual SNP calling step. The problem of computing the \EBWT is being intensively studied, and improving its efficiency is out of the aim of this paper. However, a recent work \cite{Dolle2017} suggests that direct storing of read data with a compressed \EBWT leads to considerable space savings, and could therefore become the standard in the future. Our strategy can be easily adapted to directly take as input these compressed formats. 
For both tools, we thus omit time/space requirements of the preprocessing steps: computing the data structures described in Section \ref{subsec:Pre-processing} for \PosCluster, and constructing the dBG and removing erroneous $k$-mers for \DiscoSnpPiu. 
Building the dBG requires a few minutes and, in order to keep the RAM usage very low, no other information other than $k$-mer presence is stored in the dBG used by \DiscoSnpPiu. On the other hand, the construction of \EBWT, \LCP and \GSA can take a few hours (around 90 minutes by using \EGSA \cite{Louza2017}). So, overall \DiscoSnpPiu is faster than \PosCluster when considering both pre-processing and post-processing.

We run \DiscoSnpPiu with default parameters (includes $k$-mers size $31$) except for $P=3$ (it searches up to $P$ SNPs per bubble) and $b$ ($b=0$ forbids variants for which any of the two paths is branching;
$b=2$ imposes no limitation on branching; $b=1$ is inbetween).

\PosCluster takes as input few main parameters, among which the most important are the lengths of right and left contexts surrounding SNPs in the output (\texttt{-L} and \texttt{-R}), the minimum cluster size (\texttt{-m}), and (\texttt{-v}) the maximum number of non-isolated SNPs to seek in the left contexts (like parameter $P$ of \DiscoSnpPiu). 
We decided to output 20 nucleotides preceding (and including) the SNP (\texttt{-L 20}), 30 following the SNP (\texttt{-L 30}), a minimum cluster size of \texttt{-m 6} and \texttt{-m 4}, and \texttt{-v 3}.

%Nota: la nostra opzione -v è equivalente alla loro -P. Voglio fare altri esperimenti usando -v 5 (come tu hai usato -P 5). Le nostre opzioni -L e -R sono la lunghezza del contesto dato in output a sinistra/destra dello SNP (per DiscoSNP sarebbe -L 31 -R 30). Ho usato -L 20 anzichè 31 perchè così otteniamo più sensitivity: usando
%-L 31, sale la precision (>99%), ma la sensitivity scende a 88%. La nostra opzione -m è equivalente al coverage minimo (è la lunghezza minima del cluster). In questo primo esperimento ho usato -m 5, mentre nell'altro -m 4 perchè c'è meno coverage (e meno rumore).

In Table \ref{table_experiments},
we show the number of TP, FP and FN as well as sensitivity (SEN), precision (PREC), and the number of non-isolated SNPs found by the tools. The outcome is that \PosCluster is always more precise and sensitive than \DiscoSnpPiu.
Moreover, while in our case precision is stable and always quite high ($\approx98.5$\% with \texttt{-m 6} for HG00096 and \texttt{-m 4} for HG00100, and $>99$\% with \texttt{-m 6} HG00100), 
for \DiscoSnpPiu precision is much lower in general, and even drops with $b=2$, especially with lower coverage, when inversely sensitivity grows. Sensitivity of \DiscoSnpPiu gets close to that of \PosCluster only in case $b=2$, when its precision drops and memory and time get worse than ours. Note that precision and sensitivity of \DiscoSnpPiu are consistent with those reported in \cite{discoSNP2017bioRxiv}.

%\todonic{Ho fatto un double-check e i non-isolated in Tabella 1, individuo 100, righe numero 2 e 3 dalla fine sono corretti: otteniamo veramente lo stesso numero 8637/11055. C'è però una spiegazione semplice: notare che la sensibility nostra e di DiscoSNP sono praticamente le stesse in quella riga (così come i TP). In particolare, vuol dire che becchiamo esattamente gli stessi non-isolati, mentre i 38 SNP in più che becchiamo noi sono tutti isolati. Su un totale di 58085 SNP, non è per niente improbabile che i 38 extra che becchiamo siano tutti isolati. Se volete, aggiungiamo queste considerazioni per i reviewer sospettosi.}
%\todomari{è anche notevole la sproporzione tra i nostri e i loro FP!}

%Moreover, \PosCluster finds the non-isolated SNPs unlike DiscoSnp++.

%\todo{da notare che DiscoSNP perde molto in precisione, a differenza nostra (questo va scritto e messo in tabella). Questo è un altro punto a favore nostro: abbiamo bisogno di meno coverage per generare risultati buoni!}

\begin{table}[th!]
\begin{center}
%\centering
%{\scriptsize
\begin{tabular}{|c@{\ }|c@{\ }|c@{\ }|c@{\ }|c@{\ }|c@{\ }|c@{\ }|c@{\ }|c@{\ }|c@{\ }|}
\multicolumn{10}{c}{Individual HG00096 vs reference (chromosome 22, $50818468$bp), coverage $29\times$ per sample}     \\
\hline
\hline	
 Tool       	& Param.	& Wall 	 	& RAM	& TP           	& FP &   FN  		& SEN  		& PREC & Non-isolated\\ 
       		&	 	& Clock		& in MB	 &                	&      &          		&         		&            &  SNP\\ 
%\hline
%\DiscoSnpPiu    &   b=0, P=1	& 4:28	    & 100	& 28068 	& 1743 & 17979 	& 60.96\%      & 94.15\% & 2/8658\\ 
%\hline
%		&   b=0, P=5	& 4:31	& 115	& 33042	 	& 4819	 & 13005	 	& 	71.76\%    	& 	87.27\% & 4976/8658	\\ 
%\hline
%		&   b=2, P=1	& 19:24	& 410	& 34123	 	& 39647	 & 	11924 	&  74.10\%   	 	& 46.26\%	 & 126/8658	\\ 
%\hline
%		&   b=2, P=5	& 21:48	& 611	& 40507	 	& 64489	 & 	5540 	& 	87.97\%     	& 38.58\% 	 & 6689/8658	 \\ 
\hline
%\multirow{3}{*}{\DiscoSnpPiu}
\DiscoSnpPiu &   b=0	& 5:07	& 101	& 	32773 	& 	3719 & 13274	 	& 	 71.17\%	& 	89.81\% & 4707/8658	\\ 
\hline
 \DiscoSnpPiu &   b=1	& 16:39	& 124	&  37155	& 10599	 & 	 8892	&  80.69\% 	& 77.80\% & 5770/8658	\\ 
\hline
 \DiscoSnpPiu &   b=2	& 20:42	& 551	& 40177	 	& 58227	 & 	5870 	& 87.25\% 	& 40.83\% &	6325/8658 \\ 
 \hline 
\PosCluster &  m=4      & 19:43		& 415	& 42973 & 2639 &  3074 & 93.32\%       & 94.21\% & 7268/8658\\ 
\hline             
\PosCluster &  m=6      & 24:58		& 411	& 41972 & 630 & 4075  & 91.15\%       & 98.52\% & 6940/8658\\
\hline
\end{tabular}
\qquad
\begin{tabular}{|c@{\ }|c@{\ }|c@{\ }|c@{\ }|c@{\ }|c@{\ }|c@{\ }|c@{\ }|c@{\ }|c@{\ }|}
\multicolumn{10}{c}{Individual HG00100 vs reference (chromosome 16, $90338345$bp), coverage $22\times$ per sample}     \\
\hline
\hline	
 Tool       	& Param.	& Wall 	 	& RAM	& TP           	& FP &   FN  		& SEN  		& PREC & Non-isolated\\ 
       		&	 	& Clock		& in MB	 &                	&      &          		&         		&            &  SNP\\ 
%\hline
%\DiscoSnpPiu    &   b=0, P=1	& 8:02	& 200	& 41494 	& 5533 & 24626 	& 62.76\%      & 88.23\% & 0/11055\\ 
%\hline
%		&   b=0, P=5	& 8:44	& 199	& 48338	 	& 12160	 & 	17782 	& 	 73.11\%	& 	79.90\% & 6844/11055	\\ 
%\hline
%		&   b=2, P=1	& 48:21	& 937	& 49809	 	& 85447	 & 16311	 	& 75.33\% 	& 36.83\% &	159/11055 \\
%\hline
%		&   b=2, P=5	& 54:23	& 1408	& 58047 	& 136336	 & 	8073 	&  87.79\% 	& 29.86\% & 8637/11055	\\         
        \hline
\DiscoSnpPiu &   b=0	& 6:20	& 200	& 	48119 	& 10226	 & 	 18001	& 	 72.78\%	& 	82.47\% & 6625/11055	\\ 
\hline
\DiscoSnpPiu &   b=1	& 31:57	& 208	& 53456 	& 24696	 & 	12664 	&  80.85\% 	& 68.40\% & 7637/11055	\\ 
\hline
\DiscoSnpPiu &   b=2	& 51:45	& 1256	& 57767	 	& 124429	 & 	8353 	& 87.37\% 	& 31.71\% &	8307/11055 \\ 
\hline
\PosCluster  & m=4 		& 41:12	& 423		& 61264 & 943 & 4856  & 92.66\%       & 98.48\% & 9314/11055\\ 
\hline
\PosCluster  & m=6 		& 43:51 & 419 & 58085  & 391 & 8035  & 87.85\%       & 99.33\% & 8637/11055 \\ 
\hline
\end{tabular}
\end{center}
%}. %end-scriptsize
\caption{Comparative results of \PosCluster (only SNP calling) and \DiscoSnpPiu (only {\sc KisSnp2} and {\sc kissreads2}). Wall clock is the elapsed time from start to completion of the instance, while RAM is the peak Resident Set Size (RSS). Both values were taken with \texttt{/usr/bin/time} command.}
\label{table_experiments}
\end{table}

\section{Conclusions}

We introduced a positional clustering framework for the characterization of breakpoints in the \EBWT, paving the way to several possible applications in assembly-free and reference-free analysis of NGS data. The experiments proved the feasibility and potential of our approach. 
Further work will focus on improving the prediction in highly repeated genomic regions and using our framework to predict SNPs, predict INDELs, haplotyping, correcting sequencing errors, detecting Alternative Splicing events in RNA-Seq data, and sequence assembly.

%At the best of our knowledge, our tool described in Section \ref{sec:snp} is the only lightweight tool capable of finding the SNPs directly by raw reads collection index that uses only the \EBWT and the \LCP (or optionally the \GSA).

%METTERE UNA DETTAGLIATA LISTA DELLE APPLICAZIONI:
%\begin{itemize}
%\item Error correction
%\item Assembly: the strategy finds overlaps between reads, so it can be used for assembly purposes
%\item Haplotype discovery: if fed with reads from just one individual, we can find diploid sites
%AS detection in RNA-Seq
%\end{itemize}

%nadia: E' VERA QUESTA COSA SOTTO???? Per ora la commento.
%however, the search phase can be faster.

%%
%% Bibliography
%%

%% Please use bibtex, 

\bibliography{biblio}

\end{document}